\newtheorem{theorem}{Theorem}[section]
\newtheorem{lemma}[theorem]{Lemma}
\newtheorem{claim}[theorem]{Claim}
\newtheorem{corollary}[theorem]{Corollary}
\newcommand{\R}{\ensuremath{\mathbb{R}}}
\newcommand{\Z}{\ensuremath{\mathbb{Z}}}
\newcommand{\lat}{\mathcal{L}}
\newcommand{\eps}{\varepsilon} 
\renewcommand{\epsilon}{\varepsilon}
\newcommand{\poly}{\mathrm{poly}}
\DeclareMathOperator*{\expect}{\mathbb{E}}
\DeclareMathOperator{\dist}{dist}
\renewcommand{\vec}[1]{\ensuremath{\boldsymbol{#1}}}
\newcommand{\gs}[1]{\ensuremath{\widetilde{\vec{#1}}}}
\DeclarePairedDelimiter\inner{\langle}{\rangle}
\begin{document}
	\title{A time-distance trade-off for GDD with preprocessing---\\Instantiating the DLW heuristic}
	\author{
		Noah Stephens-Davidowitz\\Massachusetts Institute of Technology\\
		\texttt{noahsd@gmail.com}
	}
	\date{}
	\maketitle

\begin{abstract}
	For $0 \leq \alpha \leq 1/2$, we show an algorithm that does the following. Given appropriate preprocessing $P(\lat)$ consisting of $N_\alpha := 2^{O(n^{1-2\alpha} + \log n)}$ vectors in some lattice $\lat \subset \R^n$ and a target vector $\vec{t}\in \R^n$, the algorithm finds $\vec{y} \in \lat$ such that $\|\vec{y}- \vec{t}\| \leq n^{1/2 + \alpha} \eta(\lat)$ in time $\poly(n) \cdot N_\alpha$, where $\eta(\lat)$ is the smoothing parameter of the lattice.
	
	The algorithm itself is very simple and was originally studied by Doulgerakis, Laarhoven, and de Weger (to appear in PQCrypto, 2019), who proved its correctness under certain reasonable heuristic assumptions on the preprocessing $P(\lat)$ and target $\vec{t}$. Our primary contribution is a choice of preprocessing that allows us to prove correctness without any heuristic assumptions.
	
	Our main motivation for studying this is the recent breakthrough algorithm for IdealSVP due to Hanrot, Pellet--Mary, and Stehl\'e (to appear in Eurocrypt, 2019), which uses the DLW algorithm as a key subprocedure. In particular, our result implies that the HPS IdealSVP algorithm can be made to work with fewer heuristic assumptions.
	
	Our only technical tool is the discrete Gaussian distribution over $\lat$, and in particular, a lemma showing that the one-dimensional projections of this distribution behave very similarly to the continuous Gaussian. This lemma might be of independent interest.
\end{abstract}

\section{Introduction}

A lattice $\lat \subset \R^n$ is the set of all integer linear combinations 
\[
\lat := \{z_1 \vec{b}_1 + \cdots + z_n \vec{b}_n  \ : \ z_i \in \Z \}
\]
of linearly independent basis vectors $\vec{b}_1,\ldots, \vec{b}_n \in \R^n$. For a lattice $\lat \subset \R^n$ and target vector $\vec{t}\in \R^n$, the $d$-Guaranteed Distance Decoding problem ($d$-GDD, or just GDD) asks us to find $\vec{y}\in \lat$ such that $\|\vec{y}- \vec{t}\| \leq d$ for some distance $d := d(\lat)$ that depends only on $\lat$. In particular, we must have $d \geq \mu(\lat)$, where $\mu(\lat) := \max \dist(\vec{t}, \lat)$ is the covering radius of the lattice.

GDD with preprocessing (GDDP) is the variant of this problem in which we are allowed to perform arbitrary preprocessing on the lattice (but not on $\vec{t}$). I.e., formally an ``algorithm'' for GDDP is really a \emph{pair} of algorithms, a preprocessing algorithm, which takes as input (a basis for) a lattice $\lat \subset \R^n$ and outputs some preprocessing $P(\lat)$, and a query algorithm which takes as input $P(\lat)$ and a target $\vec{t}$ and outputs a valid solution to the GDD instance $(\lat, \vec{t})$. The complexity measure that interests us for such algorithms is the running time of the query algorithm.

In~\cite{DLdFindingClosest19}, Doulgerakis, Laarhoven, and de Weger (DLW) gave an elegant algorithm for GDDP whose correctness relies on certain heuristic assumptions.
(Our presentation here differs quite a bit from DLW's. See Section~\ref{sec:laarhoven_related}.) 
In fact,~\cite{DLdFindingClosest19} gave a family of algorithms parameterized by $0 \leq \alpha \leq 1/2$ whose preprocessing consists of $N_\alpha \approx 2^{n^{1-2\alpha}}$ lattice vectors in $\lat$ whose length is roughly $r$. Given a target $\vec{t}$, the query algorithm starts by setting $\vec{t}' = \vec{t}$. The algorithm then simply searches for a vector $\vec{y}$ in the preprocessing list and an integer $k$ such that $\|k\vec{y}- \vec{t}'\| < \|\vec{t}'\|$. If it finds one, it replaces $\vec{t}'$ by $\vec{t}'- k\vec{y}$ and repeats this procedure. Finally, it outputs $\vec{y}' := \vec{t} - \vec{t}' \in \lat$. Under certain heuristic assumptions that in particular imply that the preprocessing is nicely distributed,~\cite{DLdFindingClosest19} showed that this algorithm terminates with $\|\vec{y}' - \vec{t}\| = \|\vec{t}'\| \lesssim n^\alpha \cdot r$ in time $\poly(n) \cdot N_\alpha$.

DLW's algorithm is the first to provide a smooth trade-off between the running time and the distance $d$. (Such trade-offs are known for other lattice problems. E.g., without preprocessing, block reduction~\cite{SchHierarchyPolynomial87,GNFindingShort08} algorithms accomplish this for many lattice problems, and with preprocessing, such trade-offs are known for Bounded Distance Decoding and the Closest Vector Problem~\cite{LLMBoundedDistance06,DRSClosestVector14}.) This recently found an exciting application discovered by Pellet--Mary, Hanrot, and Stehl\'e~\cite{PHS19}.~\cite{PHS19} showed the best known time-approximation-factor trade-off for the very important problem of finding short non-zero vectors in ideal lattices (given suitable preprocessing on the underlying number field). Their algorithm uses the DLW algorithm as a key subprocedure. However, since DLW's algorithm relies on certain heuristic assumptions, their application crucially relies on the (reasonable but unproven) assumption that these heuristics apply in their particular use case.

\subsection{Removing the heuristic in DLW's GDDP algorithm}

We show how to instantiate DLW's heuristic algorithm in a provably correct way. In particular, we show an explicit distribution over the lattice such that, when the preprocessing consists of independent samples from this distribution, the above algorithm provably succeeds with high probability. Indeed, there is a very natural choice for this distribution: the discrete Gaussian over the lattice, $D_{\lat, s}$. This is the distribution that assigns probability to each lattice vector $\vec{y}\in \lat$ proportional to its Gaussian mass $\exp(-\pi \|\vec{y}\|^2/s^2)$, and it is a ubiquitous tool in lattice algorithms and the study of lattices more generally. (See, e.g.,~\cite{NSDthesis}.) When the width parameter $s > 0$ is at least as large as the \emph{smoothing parameter} $\eta(\lat)$, the discrete Gaussian distribution $D_{\lat, s}$ provably behaves quite similarly to the continuous Gaussian in many ways~\cite{MRWorstcaseAveragecase07}. (E.g., its moments are close to those of a continuous Gaussian.) So, one might expect that it will be distributed nicely enough to work for DLW's use case.

We show that for $s = \eta(\lat)$, the discrete Gaussian $D_{\lat, s}$ does in fact suffice to provably instantiate DLW's heuristic algorithm with $r \approx \sqrt{n} \cdot \eta(\lat)$. (This is essentially the same value of $r$ used in~\cite{DLdFindingClosest19}. See Section~\ref{sec:laarhoven_related} for more discussion.) I.e., we prove the following theorem.

\begin{theorem}
	\label{thm:intro}
	For any $0 \leq \alpha \leq 1/2$, there is an algorithm that solves $d$-GDDP in time $2^{O(n^{1-2\alpha} + \log n)}$ where $d(\lat) := n^{1/2 + \alpha} \cdot \eta(\lat)$.
\end{theorem}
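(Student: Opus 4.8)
The plan is to set the width parameter $s = \eta(\lat)$, let the preprocessing $P(\lat)$ consist of $N_\alpha = 2^{O(n^{1-2\alpha} + \log n)}$ independent samples $\vec{y}_1, \ldots, \vec{y}_{N_\alpha} \sim D_{\lat, s}$, and analyze the DLW query algorithm described above with this preprocessing. The running time bound is immediate: each round of the loop takes time $\poly(n) \cdot N_\alpha$ (for each $\vec{y}_i$ we need only check a bounded range of integers $k$, since $\|k\vec{y}_i - \vec{t}'\| < \|\vec{t}'\|$ forces $|k| \leq 2\|\vec{t}'\|/\|\vec{y}_i\|$, and $\|\vec{t}'\|$ is non-increasing so this is bounded once $\|\vec{t}'\|$ is), and since $\|\vec{t}'\|^2$ strictly decreases by at least a fixed polynomial amount each round after suitable discretization, there are at most $\poly(n) \cdot \|\vec{t}\|^2$ rounds. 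The heart of the matter is \emph{correctness}: I must show that with high probability over $P(\lat)$, for \emph{every} target $\vec{t}'$ with $\|\vec{t}'\| > d(\lat) = n^{1/2+\alpha} \eta(\lat)$, there exists some $\vec{y}_i$ in the list and some integer $k$ with $\|k\vec{y}_i - \vec{t}'\| < \|\vec{t}'\|$; equivalently, the algorithm cannot get stuck above distance $d(\lat)$.

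The key reduction is geometric: $\|\vec{y}_i - \vec{t}'\| < \|\vec{t}'\|$ (taking $k=1$) holds precisely when $\vec{y}_i$ lies in the open ball of radius $\|\vec{t}'\|$ centered at $\vec{t}'$, i.e. when $\inner{\vec{y}_i, \vec{t}'} > \|\vec{y}_i\|^2 / 2$. So I want to show that a discrete Gaussian sample has a non-negligible probability — at least $1/N_\alpha$ up to polynomial factors — of satisfying $\inner{\vec{y}_i, \vec{t}'/\|\vec{t}'\|} > \|\vec{y}_i\|^2/(2\|\vec{t}'\|)$. Here is where the advertised technical lemma enters: the one-dimensional projection of $D_{\lat, s}$ onto the direction $\vec{u} := \vec{t}'/\|\vec{t}'\|$ behaves like a continuous Gaussian of width $s$. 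Since a continuous Gaussian of width $\eta(\lat)$ in direction $\vec{u}$ exceeds a threshold $\tau$ with probability $\approx \exp(-\pi \tau^2/s^2)$, and separately $\|\vec{y}_i\|$ concentrates around $\sqrt{n/(2\pi)}\cdot s$ so that $\|\vec{y}_i\|^2/(2\|\vec{t}'\|) \lesssim n s^2/(4\pi \|\vec{t}'\|) \le \sqrt{n}\, s/(4\pi n^{\alpha})$ when $\|\vec{t}'\| > n^{1/2+\alpha} s$, the relevant threshold is $\tau \lesssim \sqrt{n} s / n^{\alpha}$, giving success probability per sample $\gtrsim \exp(-O(n^{1-2\alpha})) = 1/N_\alpha$ up to the polynomial slack. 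A union bound over the $N_\alpha$ samples (and a net of directions $\vec{u}$, plus the bounded number of distinct $\vec{t}'$ values the algorithm can encounter — handled by noting $\vec{t}' \in \vec{t} + \lat$ lies in a ball of bounded radius, so there are finitely many of them, or by a slightly more careful argument conditioning on the whole trajectory) then gives the result.

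The main obstacle is making the one-dimensional projection lemma precise and strong enough. A naive moment comparison between $D_{\lat, s}$ and the continuous Gaussian is not enough: I need \emph{tail} control, specifically a lower bound on $\Pr_{\vec{y} \sim D_{\lat,s}}[\inner{\vec{y}, \vec{u}} > \tau]$ that matches the continuous Gaussian tail $\exp(-\pi\tau^2/s^2)$ up to $\poly(n)$ factors, for \emph{arbitrary} unit vectors $\vec{u}$, including those poorly aligned with the lattice. The standard approach is to write the projected density as a sum over cosets of the one-dimensional sublattice (or its dual) in direction $\vec{u}$, apply Poisson summation, and use the definition of the smoothing parameter $s \geq \eta(\lat)$ to control the error terms — but subtleties arise because $\lat \cap \spn(\vec{u})$ may be trivial, so one works instead with the projection $\Pi_{\vec{u}} \lat$ of the lattice, whose smoothing parameter is at most $\eta(\lat)$. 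Once the lemma guarantees $\Pr[\inner{\vec{y}, \vec{u}} > \tau] \geq \exp(-\pi \tau^2/s^2)/\poly(n)$ for all $\tau$ in the relevant range, the rest of the argument is the routine union bound and discretization bookkeeping sketched above. A secondary technical point to handle carefully is the interaction between the union bound over directions and the fact that the sequence of targets $\vec{t}'$ is itself a random function of $P(\lat)$; this is resolved by proving the "every target" statement for a fixed net and all elements of the list simultaneously, so that correctness holds pointwise with high probability regardless of the algorithm's random trajectory.
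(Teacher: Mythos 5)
Your plan is essentially the paper's: discrete Gaussian preprocessing $D_{\lat,s}$ at $s=\eta(\lat)$, a lower bound on the tails of one-dimensional projections of $D_{\lat,s}$, and a union bound over an $\eps$-net of directions so that correctness holds for \emph{all} targets simultaneously (which, as you note, is what makes the dependence of the trajectory on the preprocessing a non-issue). Two of your steps have genuine gaps, though. The first is the progress-per-round analysis. Your geometric reduction, $\|\vec{y}_i-\vec{t}'\|<\|\vec{t}'\|$ iff $\inner{\vec{y}_i,\vec{t}'}>\|\vec{y}_i\|^2/2$, is the $k=1$ case and only yields a \emph{strict} decrease in $\|\vec{t}'\|$, which can be a vanishingly small fraction of $\|\vec{t}'\|$ when $\|\vec{t}'\|\gg d$: to force $\|\vec{t}'-\vec{y}_i\|^2\leq(1-1/n^2)\|\vec{t}'\|^2$ with $k=1$ you would need $\inner{\vec{y}_i,\vec{u}}\gtrsim\|\vec{t}'\|/(2n^2)$, a threshold that grows with $\|\vec{t}'\|$ and destroys the tail probability. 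Your resulting round count of $\poly(n)\cdot\|\vec{t}\|^2$ is exponential even after an initial Babai-type reduction brings the target to norm $2^n\eta(\lat)$, and that ruins the claimed running time precisely in the interesting regime $\alpha$ near $1/2$. The paper's fix is to prove the single-sample statement only for unit-norm targets (in units where $d=1$) and then, for a target of norm $\beta\geq 1$, use the multiple $k\vec{y}_i$ with $k\in[\beta/2,\beta]$; a short computation shows that squared distance $1-4/n^2$ at unit scale converts into a $(1-1/n^2)$ multiplicative decrease at every scale, so $O(n^3)$ rounds suffice. You need this step (or an equivalent) for the theorem as stated.

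The second gap is in your proposed proof of the projection lemma. Poisson summation over the projection $\Pi_{\vec{u}}\lat$ runs into a real obstruction: for generic $\vec{u}$, the projection of $\lat$ onto $\spn(\vec{u})$ is a \emph{dense} subgroup of the line, not a lattice, so its ``smoothing parameter'' and the associated Poisson summation formula are not available. The paper avoids the projected lattice entirely: it computes the moment generating function $\expect[\exp(2\pi\beta\inner{\vec{X},\vec{u}})]=\exp(\pi\beta^2)\cdot\rho_s(\lat-\beta\vec{u})/\rho_s(\lat)$, pins it to within a factor of $3$ using only the $n$-dimensional smoothing condition (Claim~\ref{clm:smooth}), and then converts this two-sided MGF control into a tail \emph{lower} bound by integrating against a test function that is negative outside a window of width $2$ above the threshold. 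If you want to keep a Fourier-analytic route, it should be carried out on $\R^n$ directly rather than on the projection. The remaining bookkeeping in your sketch (concentration of $\|\vec{y}_i\|$ via Banaszczyk, the threshold $\tau\approx n^{1/2-\alpha}s$, the net of size $(3n)^{3n}$, and the union bound) matches the paper and is fine.
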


Theorem~\ref{thm:intro} is primarily interesting for $\alpha$ strictly between zero and $1/2$. For $\alpha = 0$, Theorem~\ref{thm:intro} is outperformed by existing $2^{O(n)}$-time algorithms for CVP~\cite{MVDeterministicSingle13,ADSSolvingClosest15}. These algorithms do not require preprocessing and are actually guaranteed to find a \emph{closest} vector to the target $\vec{t}$, so our algorithm is trounced by the competition in this regime. Similarly, for $\alpha = 1/2$, Babai's celebrated polynomial-time algorithm~\cite{BabLovaszLattice86} always matches or outperforms Theorem~\ref{thm:intro} when instantiated with an appropriate basis as preprocessing.

However, for parameters $\alpha$ satisfying $C/\log n \leq \alpha \leq 1/2 - C/\log n$ for a sufficiently large constant $C > 0$, Theorem~\ref{thm:intro} is the best known algorithm and the first non-trivial result whose correctness has been proven.%
\footnote{Formally, Babai's algorithm can be used to solve $d$-GDDP in polynomial time for a function $d(\lat)$ satisfying $\eta(\lat) \lesssim d(\lat) \lesssim n \cdot \eta(\lat)$. (This function is given by $d(\lat)^2 = \min \sum \|\gs{\vec{b}}_i\|^2/4$, where the minimum is over all (ordered) lattice bases $\vec{b}_1,\ldots, \vec{b}_n$ and $\gs{\vec{b}}_i$ represents the Gram-Schmidt orthogonalization.) So, the distance $n^{1/2 + \alpha} \eta(\lat)$ that we achieve is incomparable with Babai's for $\alpha < 1/2$. Indeed, for some rather degenerate lattices---such as the lattice generated by $2^n \vec{e}_1, \vec{e}_2,\ldots, \vec{e}_n$---we have $d(\lat) \approx \eta(\lat)$. I.e., Babai's algorithm outperforms Theorem~\ref{thm:intro} by a factor of roughly $n^{1/2 + \alpha}$ in the distance for such lattices. We can of course always combine the two algorithms to achieve the best of both worlds, a distance of $\min\{d(\lat) , n^{1/2 + \alpha} \eta(\lat) \}$. However, for ``typical'' lattices that interest us, like those that satisfy the heuristics in~\cite{DLdFindingClosest19} or those that arise in cryptography, we have $d(\lat) \approx n \cdot \eta(\lat)$, so that our algorithm strictly outperforms Babai's for $\alpha \leq 1/2 - C/\log n$.}
In particular, Theorem~\ref{thm:intro} removes~\cite{PHS19}'s reliance on certain heuristic assumptions.  (\cite{PHS19} also requires additional unrelated heuristic assumptions, which our result does not remove. We refer the reader to~\cite{PHS19} for more information.)

Behind this result is a geometric lemma concerning the discrete Gaussian distribution that, to the author's knowledge, is novel. The lemma shows that one-dimensional projections of the discrete Gaussian look very much like a continuous Gaussian for parameters above smoothing. (See Lemma~\ref{lem:DGS_one_dimensional_projection}.)

\subsection{Relation to DLW}
\label{sec:laarhoven_related}

Our presentation here is quite different from the presentation in~\cite{DLdFindingClosest19}. (See also an earlier version of the same paper~\cite{LaaFindingClosest16} and a closely related paper~\cite{LaaSievingClosest16}.) We attempt to clarify some of the differences here to avoid confusion.

First of all, DLW described their algorithm as a solution to the \emph{Closest Vector Problem} (CVP), in which the goal is to output a vector $\vec{y}\in \lat$ with $\|\vec{y} - \vec{t}\| \leq \gamma \cdot \dist(\vec{t}, \lat)$ for some approximation factor $\gamma \geq 1$. In contrast, we call the same algorithm a GDD(P) algorithm. This discrepancy arises when one moves from heuristic algorithms to provably correct algorithms. Since $\dist(\vec{t}, \lat)$ is nearly maximal for ``most'' $\vec{t}$~\cite{HLRNoteDistribution09}, DLW's heuristics quite reasonably imply that $\dist(\vec{t}, \lat)$ is nearly maximal, i.e., $\dist(\vec{t}, \lat) \approx \mu(\lat)$. With this assumption, $\gamma$-CVP is essentially equivalent to $(\gamma \mu(\lat))$-GDD. However, without such a heuristic, the two problems seem to be quite different, so that the distinction is unfortunately necessary here. 

Second, since~\cite{DLdFindingClosest19} describe their results in terms of CVP and do not mention the smoothing parameter $\eta(\lat)$, their results are formally incomparable with Theorem~\ref{thm:intro}. However, we note that the heuristics in~\cite{DLdFindingClosest19} imply that $\eta(\lat) \approx \lambda_1(\lat)/\sqrt{n} \approx \mu(\lat)/\sqrt{n}$, and the DLW algorithm finds vectors within distance roughly $n^{\alpha} \lambda_1(\lat) $ of the target. Since we obtain vectors within distance $n^{1/2+\alpha} \eta(\lat)$, our result essentially matches theirs when their heuristics apply.

Third, while we match DLW's algorithm asymptotically, we do not claim to match the constants. Indeed, in the language of this paper, much of~\cite{DLdFindingClosest19} is devoted to finding vectors within distance $c_1 \sqrt{n} \cdot \eta(\lat)$ in time $2^{c_2 n + o(n)}$ for small  constants $0 < c_1, c_2 < 1$. In contrast, we are mostly interested in what appears as a secondary result in that paper: the time-distance trade-off achievable for distance $n^{1/2 + \alpha} \eta(\lat)$ and time $2^{O(n^{1-2\alpha} + \log n)}$ for $0 < \alpha < 1/2$. And, we make very little effort to optimize the constants. For example,~\cite{DLdFindingClosest19} uses nearest neighbor data structures to let the query algorithm avoid reading the entire preprocessing, which we do not attempt to replicate here. Similarly, while~\cite{DLdFindingClosest19} proposed specific techniques for computing the preprocessing in $2^{c n + o(n)}$ time, we ignore this. (We do note, however, that~\cite{ADRSSolvingShortest15} shows how to sample the preprocessing in time $2^{n + o(n)}$.)

\subsection*{Acknowledgments}
I thank Guillaume Hanrot, Thijs Laarhoven, Alice Pellet--Mary, Oded Regev, and Damien Stehlé for helpful discussions. I also thank Alice Pellet--Mary, Guillaume Hanrot, and Damien Stehlé for sharing early versions of their work with me.

\section{Preliminaries}

Throughout this work, we adopt the common convention of expressing the running times of lattice algorithms in terms of the dimension $n$ only, ignoring any dependence on the bit length of the input $B$. Formally, we should specify a particular input format for the (basis of the) lattice (e.g., by restricting our attention to rational numbers and using the natural binary representation of a rational matrix to represent a basis for the lattice), and our running time should of course have some dependence on $B$. Consideration of the bit length would simply add a $\poly(B)$ factor to the running time for the algorithm(s) considered in this paper, provided that the input format allows for efficient arithmetic operations.

\subsection{The discrete Gaussian}

For a vector $\vec{x} \in \R^n$ and parameter $s > 0$, we write $\rho_s(\vec{x}) := \exp(-\pi \|\vec{x}\|^2/s^2)$ for the Gaussian mass of $\vec{x}$ with parameter $s$. For a lattice $\lat \subset \R^n$ and shift vector $\vec{t}\in \R^n$, we write
\[
\rho_s(\lat - \vec{t}) := \sum_{\vec{y}\in \lat} \rho_s(\vec{y} - \vec{t})
\]
for the Gaussian mass of $\lat - \vec{t}$ with parameter $s$. We write $D_{\lat, s}$ for the probability distribution over $\lat$ defined by
\[
\Pr_{\vec{X} \sim D_{\lat, s}}[\vec{X}= \vec{y}] = \frac{\rho_s(\vec{y})}{\rho_s(\lat)}
\]
for $\vec{y} \in \lat$.

The dual lattice $\lat^* \subset \R^n$ is the set of vectors that have integer inner product with all lattice vectors,
\[
\lat^* := \{\vec{w} \in \R^n \ : \ \forall \vec{y}\in \lat,\ \inner{\vec{w}, \vec{y}} \in \Z \}
\; .
\]
Micciancio and Regev defined the \emph{smoothing parameter} $\eta(\lat)$ as the unique parameter $s$ such that $\rho_{1/s}(\lat^*) = 3/2$~\cite{MRWorstcaseAveragecase07}.\footnote{This is more commonly referred to as $\eta_{1/2}(\lat)$, where $\eta_\eps(\lat)$ is the unique parameter $s$ such that $\rho_{1/s}(\lat^*) = 1+\eps$. Since we will always take $\eps = 1/2$, we simply omit it. Our results remain essentially unchanged if we take $\eps$ to be any constant strictly between zero and one.} The following claim justifies the name ``smoothing parameter,'' and it is the only fact about the smoothing parameter that we will need.

\begin{claim}
	\label{clm:smooth}
	For any lattice $\lat \subset \R^n$, parameter $s \geq \eta(\lat)$, and shift $\vec{t} \in \R^n$,
	\[
	\frac{1}{3}  \leq \frac{\rho_s(\lat - \vec{t})}{\rho_s(\lat)} \leq 1
	\; .
	\]
\end{claim}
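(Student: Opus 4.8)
The plan is to pass to the Fourier side via the Poisson summation formula, which is the standard route to facts about the smoothing parameter (this claim is essentially due to Micciancio and Regev~\cite{MRWorstcaseAveragecase07}). Recall that the Gaussian $\rho_s$ is Schwartz and, up to the scalar $s^n$, its own Fourier transform: $\widehat{\rho_s} = s^n \rho_{1/s}$. Applying Poisson summation to $\lat$ with shift $-\vec{t}$ therefore gives
\[
\rho_s(\lat - \vec{t}) = \frac{s^n}{\det \lat} \sum_{\vec{w} \in \lat^*} \rho_{1/s}(\vec{w}) \, e^{-2\pi i \inner{\vec{w}, \vec{t}}} ,
\]
and specializing to $\vec{t} = \vec{0}$ yields $\rho_s(\lat) = \tfrac{s^n}{\det \lat}\, \rho_{1/s}(\lat^*)$. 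Dividing, the factor $s^n/\det \lat$ cancels and
\[
\frac{\rho_s(\lat - \vec{t})}{\rho_s(\lat)} = \frac{1}{\rho_{1/s}(\lat^*)} \sum_{\vec{w} \in \lat^*} \rho_{1/s}(\vec{w}) \, e^{-2\pi i \inner{\vec{w}, \vec{t}}} .
\]
Note the left-hand side is real and positive, so the sum on the right is real.

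Next I would isolate the $\vec{w} = \vec{0}$ term, which contributes exactly $1$, and control the rest by the triangle inequality. Since $\bigl|e^{-2\pi i \inner{\vec{w}, \vec{t}}}\bigr| \le 1$ and $\bigl|e^{-2\pi i \inner{\vec{w}, \vec{t}}} - 1\bigr| \le 2$, the tail $\sum_{\vec{w} \ne \vec{0}} \rho_{1/s}(\vec{w}) e^{-2\pi i \inner{\vec{w}, \vec{t}}}$ has absolute value at most $\rho_{1/s}(\lat^*) - 1$. Hence the numerator of the displayed ratio lies in the real interval $\bigl[\, 2 - \rho_{1/s}(\lat^*),\ \rho_{1/s}(\lat^*)\,\bigr]$ (the upper bound coming simply from replacing each phase by its modulus). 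The final ingredient is the monotonicity $\rho_{1/s}(\lat^*) = 1 + \sum_{\vec{w} \in \lat^* \setminus \{\vec{0}\}} e^{-\pi s^2 \|\vec{w}\|^2}$ is non-increasing in $s$, so $s \ge \eta(\lat)$ forces $\rho_{1/s}(\lat^*) \le \rho_{1/\eta(\lat)}(\lat^*) = 3/2$ by definition of $\eta$. For the upper bound, the numerator is at most $\rho_{1/s}(\lat^*)$, so the ratio is at most $1$. For the lower bound, the numerator is at least $2 - \rho_{1/s}(\lat^*) \ge 2 - 3/2 = 1/2$ while the denominator is at most $3/2$, giving a ratio of at least $1/3$.

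There is no serious obstacle here; the only point requiring any care is setting up the Fourier-analytic step correctly — verifying that Poisson summation applies to $\rho_s$ (it does, as $\rho_s$ is Schwartz and $\rho_{1/s}(\lat^*)$ converges, which also justifies interchanging sum and integral), and keeping track of which exponentials get bounded by $1$ versus by $2$. Everything after that is an elementary estimate, and the inequality $s \ge \eta(\lat) \Rightarrow \rho_{1/s}(\lat^*) \le 3/2$ is immediate from the definition. I would present the whole argument in a few lines.
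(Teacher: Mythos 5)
Your proof is correct, and it is exactly the standard Poisson-summation argument behind the cited result of Micciancio and Regev: the paper itself states Claim~\ref{clm:smooth} without proof, deferring to~\cite{MRWorstcaseAveragecase07}, so there is no in-paper argument to contrast with. All the steps check out --- the ratio equals $\rho_{1/s}(\lat^*)^{-1}\sum_{\vec{w}\in\lat^*}\rho_{1/s}(\vec{w})e^{-2\pi i\inner{\vec{w},\vec{t}}}$, the zero term contributes $1$, the tail is at most $\rho_{1/s}(\lat^*)-1\le 1/2$ in absolute value by monotonicity and the definition of $\eta$, and $(2-3/2)/(3/2)=1/3$ --- though the aside bounding $\lvert e^{-2\pi i\inner{\vec{w},\vec{t}}}-1\rvert\le 2$ is never actually used and could be dropped.
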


We will also need a simplified version of Banaszczyk's celebrated tail bound for the discrete Gaussian~\cite{BanNewBounds93}.

\begin{theorem}
	\label{thm:lazy_banaszczyk}
	For any lattice $\lat \subset \R^n$ and parameter $s > 0$,
		\[
		\Pr_{\vec{X} \sim D_{\lat, s}}[\|\vec{X}\| \geq \sqrt{n} s] \leq 2^{-n}
		\; .
		\]
\end{theorem}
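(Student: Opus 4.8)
The plan is to deduce this directly from Banaszczyk's tail bound; this is precisely why the statement is labeled a ``simplified version'' of it. Recall Banaszczyk's inequality in its standard form: for any full-rank lattice $\M \subset \R^n$ and any constant $c \geq 1/\sqrt{2\pi}$,
\[
\rho_1(\M \setminus c\sqrt{n}\, B) \leq \bigl(c\sqrt{2\pi e}\cdot e^{-\pi c^2}\bigr)^n \rho_1(\M)
\; ,
\]
where $B$ denotes the open Euclidean unit ball. Here I have specialized to parameter $s = 1$, which loses no generality by rescaling, as explained next.

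First I would reduce to the case $s = 1$. If $\vec{X} \sim D_{\lat, s}$, then $\vec{X}/s$ is distributed as $D_{\lat/s, 1}$: indeed $\rho_s(\vec{y}) = \rho_1(\vec{y}/s)$ for every $\vec{y}$, so both the individual masses and the normalizing sum $\rho_s(\lat) = \rho_1(\lat/s)$ transform consistently, and moreover $\|\vec{X}\| \geq \sqrt{n}\, s$ if and only if $\|\vec{X}/s\| \geq \sqrt{n}$. Hence
\[
\Pr_{\vec{X} \sim D_{\lat, s}}\bigl[\|\vec{X}\| \geq \sqrt{n}\, s\bigr] \;=\; \frac{\rho_1\bigl(\M \setminus \sqrt{n}\, B\bigr)}{\rho_1(\M)}, \qquad \M := \lat/s
\; ,
\]
so it suffices to bound the right-hand side for an arbitrary lattice $\M$.

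Next I would apply Banaszczyk's inequality to $\M$ with $c = 1$, which is allowed since $1 \geq 1/\sqrt{2\pi} \approx 0.399$. This gives the bound $\bigl(\sqrt{2\pi e}\cdot e^{-\pi}\bigr)^n$ on the quantity above. Finally I would check the numerical constant: $\sqrt{2\pi e} \approx 4.133$ and $e^{-\pi} \approx 0.0432$, so $\sqrt{2\pi e}\cdot e^{-\pi} \approx 0.179 < 1/2$, and therefore the probability is at most $(1/2)^n = 2^{-n}$ (in fact strictly smaller), which is the claim.

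There is no genuinely hard step here, since the theorem is an immediate corollary of a classical result. The only points that require a moment of care are: (i) bookkeeping the rescaling $\lat \mapsto \lat/s$ so that the statement phrased in terms of the width $s$ matches the standard $s=1$ formulation; (ii) the open-versus-closed-ball convention — since the event $\|\vec{X}\| \geq \sqrt{n}\, s$ is the complement of the \emph{open} ball of radius $\sqrt{n}\, s$, I would invoke the version of Banaszczyk's bound that removes the open ball, or else exploit the strictness and the generous slack $0.179 < 1/2$ to absorb the sphere into a marginally larger radius; and (iii) verifying the elementary numerical inequality $\sqrt{2\pi e}\cdot e^{-\pi} < 1/2$.
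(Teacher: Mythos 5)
Your derivation is correct: the paper states Theorem~\ref{thm:lazy_banaszczyk} without proof, citing it as a direct consequence of Banaszczyk's tail bound, and your argument (rescale to $s=1$, apply the classical inequality $\rho_1(\M \setminus c\sqrt{n}B) \leq (c\sqrt{2\pi e}\,e^{-\pi c^2})^n \rho_1(\M)$ with $c=1$, and check $\sqrt{2\pi e}\,e^{-\pi} \approx 0.179 < 1/2$) is exactly the intended justification. Your handling of the open-versus-closed-ball boundary via the numerical slack (or a slightly smaller $c$) is also sound.
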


Finally, we will need the following rather weak consequence of Babai's algorithm~\cite{BabLovaszLattice86}.

\begin{theorem}
	\label{thm:babai}
	There is a polynomial-time algorithm for $(2^n \eta(\lat))$-GDD.
\end{theorem}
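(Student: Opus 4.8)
The plan is to show that Babai's nearest-plane algorithm, run on an LLL-reduced basis, already solves $(2^n\eta(\lat))$-GDD, with a great deal of room to spare. Given (a basis for) $\lat$, run LLL in polynomial time to obtain an LLL-reduced basis $\vec b_1,\dots,\vec b_n$; then, on input $\vec t$, run Babai's nearest-plane procedure~\cite{BabLovaszLattice86} with respect to this basis and output the resulting $\vec y\in\lat$. By the standard analysis of nearest plane, $\vec t-\vec y$ lies in the box $\sum_{i=1}^{n}[-\tfrac12,\tfrac12)\gs{\vec b}_i$, so $\|\vec y-\vec t\|\le\tfrac12\bigl(\sum_{i=1}^{n}\|\gs{\vec b}_i\|^2\bigr)^{1/2}$. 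Hence it suffices to bound $\bigl(\sum_{i}\|\gs{\vec b}_i\|^2\bigr)^{1/2}$ by $2\cdot 2^{n}\eta(\lat)$; in fact we will bound it by $O\bigl(n\,2^{n/2}\bigr)\cdot\eta(\lat)$, which is far smaller.

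This comes down to a short chain of standard facts. First, LLL-reducedness (with the usual reduction parameter $3/4$: the Lov\'asz condition plus size reduction) gives $\|\gs{\vec b}_i\|^2\le 2^{\,m-i}\|\gs{\vec b}_m\|^2$ for all $m\ge i$; and since among any $i$ linearly independent lattice vectors at least one has a nonzero $\vec b_m$-coefficient for some $m\ge i$, we get $\lambda_i(\lat)\ge\min_{m\ge i}\|\gs{\vec b}_m\|$ and hence $\|\gs{\vec b}_i\|\le 2^{(n-i)/2}\lambda_i(\lat)\le 2^{(n-i)/2}\lambda_n(\lat)$. Summing the resulting geometric series, $\sum_{i}\|\gs{\vec b}_i\|^2\le\lambda_n(\lat)^2\sum_{i=1}^{n}2^{\,n-i}<2^{n}\lambda_n(\lat)^2$. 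Second, Banaszczyk's transference theorem~\cite{BanNewBounds93} gives $\lambda_n(\lat)\,\lambda_1(\lat^*)\le n$. Third, the definition of the smoothing parameter forces $\lambda_1(\lat^*)$ to be large: the two shortest nonzero dual vectors $\pm\vec w$ (each of length $\lambda_1(\lat^*)$) together contribute $2\exp(-\pi\,\eta(\lat)^2\lambda_1(\lat^*)^2)$ to the sum $\rho_{1/\eta(\lat)}(\lat^*\setminus\{\vec 0\})=1/2$, so $\lambda_1(\lat^*)\ge\sqrt{2\ln 2/\pi}\,/\,\eta(\lat)$. Combining the three bounds, $\sum_{i}\|\gs{\vec b}_i\|^2<2^{n}\bigl(n/\lambda_1(\lat^*)\bigr)^2\le\tfrac{\pi}{2\ln 2}\,n^2\,2^{n}\,\eta(\lat)^2$, and therefore $\|\vec y-\vec t\|<\tfrac12\sqrt{\tfrac{\pi}{2\ln 2}}\cdot n\,2^{n/2}\,\eta(\lat)$, which is at most $2^{n}\eta(\lat)$ for every $n\ge 1$.

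I do not expect a genuine obstacle. The two things one needs to get right are (i) the single factor of roughly $2^{n/2}$ that LLL-plus-nearest-plane loses relative to $\lambda_n(\lat)$ --- this is the only exponential loss, everything else being polynomial in $n$ --- and (ii) the bound $\lambda_n(\lat)\le\poly(n)\cdot\eta(\lat)$, equivalently $\mu(\lat)\le O(\sqrt n)\cdot\eta(\lat)$. I obtained (ii) above from Banaszczyk's transference theorem and the definition of $\eta$, but one could instead derive $\mu(\lat)\le\sqrt n\,\eta(\lat)$ directly: if every vector of $\lat-\vec t$ had length greater than $\sqrt n\,\eta(\lat)$, then Banaszczyk's tail bound applied to the coset $\lat-\vec t$ would give $\rho_{\eta(\lat)}(\lat-\vec t)\le 2^{-n}\rho_{\eta(\lat)}(\lat)$, contradicting Claim~\ref{clm:smooth} for $n\ge 2$. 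Since $2^{n}\eta(\lat)$ is a deliberately loose guarantee, used only as a crude fallback, there is no reason to optimize any of these constants.
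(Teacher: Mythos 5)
Your proof is correct, and it follows exactly the route the paper intends (the theorem is attributed to Babai's nearest-plane algorithm, and the paper's own footnote in the introduction invokes the same $\tfrac12\bigl(\sum_i\|\gs{\vec b}_i\|^2\bigr)^{1/2}$ guarantee); the paper simply states the result without writing out the LLL bound $\|\gs{\vec b}_i\|\le 2^{(n-i)/2}\lambda_n(\lat)$ and the chain $\lambda_n(\lat)\le n/\lambda_1(\lat^*)\le O(n)\cdot\eta(\lat)$ that you supply. All three ingredients and the final numeric comparison check out, so nothing further is needed.
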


\subsection{\texorpdfstring{$\eps$-nets}{Eps-nets}}

	For $\eps > 0$, we say that a set $\{\vec{v}_1,\ldots, \vec{v}_M\} \subset \R^n$ of unit vectors with $\|\vec{v}_i\| = 1$ is an $\eps$-net of the unit sphere if for any $\vec{t} \in \R^n$ with $\|\vec{t}\| = 1$, there exists $\vec{v}_i$ such that $\|\vec{v}_i - \vec{t}\| \leq \eps$. We will use a simple bound on the size of such a net, which can be proven via a simple packing argument. See \cite[Lemma 5.2]{VerIntroductionNonasymptotic12}, for example.

\begin{lemma}
	\label{lem:smallnet}
	For any $\eps > 0$, there exists an $\eps$-net of the unit sphere in $\R^n$ with $(1+2/\eps)^n$ points.
\end{lemma}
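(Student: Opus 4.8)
The plan is to use the standard volumetric packing argument: build the net from a maximal $\eps$-separated subset of the sphere, and bound its size by comparing volumes of small balls.

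First I would call a set of unit vectors $\{\vec{v}_1,\ldots,\vec{v}_M\} \subset \R^n$ \emph{$\eps$-separated} if $\|\vec{v}_i - \vec{v}_j\| > \eps$ for all $i \neq j$. The key point is that any such set is small. Let $B(\vec{c},r)$ denote the open ball of radius $r$ about $\vec{c}$. If $\vec{x}$ belonged to both $B(\vec{v}_i,\eps/2)$ and $B(\vec{v}_j,\eps/2)$ for $i\neq j$, the triangle inequality would give $\|\vec{v}_i - \vec{v}_j\| < \eps$, a contradiction; so the balls $B(\vec{v}_i,\eps/2)$ are pairwise disjoint. Each is contained in $B(\vec{0},1+\eps/2)$, since $\|\vec{v}_i\| = 1$. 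Comparing volumes, $M \cdot (\eps/2)^n \vol(B(\vec{0},1)) \leq (1+\eps/2)^n \vol(B(\vec{0},1))$, so $M \leq (1+\eps/2)^n/(\eps/2)^n = (1+2/\eps)^n$. Since the cardinality of an $\eps$-separated subset of the sphere is thus bounded by a finite quantity, an $\eps$-separated subset $\NN = \{\vec{v}_1,\ldots,\vec{v}_M\}$ of maximum cardinality exists, with $M \leq (1+2/\eps)^n$.

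Next I would check that $\NN$ is already an $\eps$-net. Given a unit vector $\vec{t}$, if $\|\vec{v}_i - \vec{t}\| > \eps$ for every $i$, then $\{\vec{v}_1,\ldots,\vec{v}_M,\vec{t}\}$ would be a strictly larger $\eps$-separated set of unit vectors, contradicting maximality; hence some $\vec{v}_i$ satisfies $\|\vec{v}_i - \vec{t}\| \leq \eps$. This gives an $\eps$-net of at most $(1+2/\eps)^n$ points, which is the claimed bound; if an exact count of $(1+2/\eps)^n$ points is wanted one simply pads $\NN$ with arbitrary unit vectors, as adding points only helps the covering property.

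I do not expect any genuine obstacle here; the argument is entirely standard (cf.\ \cite[Lemma 5.2]{VerIntroductionNonasymptotic12}). The only points needing mild care are the use of a strict inequality in the definition of $\eps$-separated---so that the radius-$\eps/2$ balls are truly disjoint, and so that maximality yields the non-strict bound $\|\vec{v}_i - \vec{t}\| \leq \eps$ in the net condition---and the elementary volume bookkeeping, both of which are routine.
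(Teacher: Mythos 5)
Your proof is correct and is precisely the standard volumetric packing argument that the paper invokes by citation to \cite[Lemma 5.2]{VerIntroductionNonasymptotic12}: a maximal $\eps$-separated set is an $\eps$-net, and disjoint radius-$\eps/2$ balls inside the radius-$(1+\eps/2)$ ball give the bound $(1+2/\eps)^n$. No issues.
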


\section{The algorithm}

We consider the following algorithm for GDDP. For an input lattice $\lat \subset \R^n$ with $n \geq 40$, the preprocessing consists of $N$ lattice vectors $\vec{y}_1,\ldots, \vec{y}_N \in \lat$. On input $\vec{t} \in \R^n$, the query algorithm behaves as follows. It first uses Theorem~\ref{thm:babai} to find $\vec{t}_0 \in \lat + \vec{t}$ such that $\|\vec{t}_0\| \leq 2^n \eta(\lat)$ and sets $j = 0$. The algorithm then does the following repeatedly. It finds an index $i$ and integer $k$ such that $\|\vec{t}_{j} - k\vec{y}_i\|^2 \leq (1-1/n^2) \cdot \|\vec{t}_{j}\|^2$, sets $\vec{t}_{j+1} := \vec{t}_{j} - k\vec{y}_i$, and increments $j$. Once the algorithm fails to find such a vector, it outputs $\vec{t}_{j} - \vec{t} \in \lat$.\footnote{To guarantee a running time of $\poly(n) \cdot N$, we can also assume that the algorithm halts and outputs $\vec{t}_j - \vec{t} \in \lat$ if $j$ reaches, say, $100n^3$. This is not strictly necessary, since we will have $\|\vec{y}_i\| \approx \sqrt{n} \cdot \eta(\lat)$ with very high probability.}

Our main theorem shows that this algorithm will succeed when the preprocessing is chosen from the right distribution.
We emphasize the order of quantifiers: with high probability over the preprocessing, the algorithm works \emph{for all targets $\vec{t} \in \R^n$}. In particular, there exists fixed preprocessing that works for all targets $\vec{t}$.

\begin{theorem}
	\label{thm:main_result}
	For any $\alpha$ with $\frac{2}{\log n} \leq \alpha \leq \frac1 2$, when the preprocessing of the above algorithm consists of $N_\alpha := n^2 e^{(n^{1/2-\alpha} + 4)^2} = 2^{O(n^{1-2\alpha} + \log n)}$ samples from $D_{\lat,s}$ for $s := \eta(\lat)$, it yields a solution to $d$-GDDP in time $\poly(n) \cdot N_\alpha$ with high probability, where $d := n^{1/2 + \alpha} \cdot \eta(\lat)$.
	\end{theorem}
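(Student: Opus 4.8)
The plan is to show that, with high probability over the choice of the $N_\alpha$ preprocessing vectors $\vec{y}_1,\ldots,\vec{y}_{N_\alpha}\sim D_{\lat,s}$, the algorithm terminates with $\|\vec{t}_j\|\leq n^{1/2+\alpha}\eta(\lat)$, and that each step runs in $\poly(n)$ time. The running time is the easy part: the number of iterations is bounded by $O(n^3)$ because each step multiplies $\|\vec{t}_j\|^2$ by $(1-1/n^2)$, starting from $\|\vec{t}_0\|^2\leq 4^n\eta(\lat)^2$ (Theorem~\ref{thm:babai}), and each iteration scans $N_\alpha$ vectors and for each one checks $\poly(n)$ candidate integers $k$ (only $k$ with $|k|\lesssim \|\vec{t}_j\|/\|\vec{y}_i\|$ matter, and w.h.p. $\|\vec{y}_i\|\approx\sqrt{n}\eta(\lat)$ by Theorem~\ref{thm:lazy_banaszczyk}; more crudely one can just appeal to the $100n^3$ cutoff in the footnote). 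So the whole argument reduces to the \emph{progress} claim: whenever $\|\vec{t}_j\| > n^{1/2+\alpha}\eta(\lat)$, there \emph{exists} some $i$ and some integer $k$ with $\|\vec{t}_j - k\vec{y}_i\|^2\leq(1-1/n^2)\|\vec{t}_j\|^2$.

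Fix a target direction $\vec{u} = \vec{t}_j/\|\vec{t}_j\|$ and length $R := \|\vec{t}_j\|$. Taking $k=1$ for simplicity, $\|\vec{t}_j-\vec{y}_i\|^2 = R^2 - 2R\inner{\vec{u},\vec{y}_i} + \|\vec{y}_i\|^2$, so we make progress as soon as $2R\inner{\vec{u},\vec{y}_i} - \|\vec{y}_i\|^2 \geq R^2/n^2$, i.e. roughly as soon as the projection $\inner{\vec{u},\vec{y}_i}$ of $\vec{y}_i$ onto the target direction is at least about $\|\vec{y}_i\|^2/(2R) + R/(2n^2)$. Since $\|\vec{y}_i\|\lesssim\sqrt n\,\eta(\lat)$ w.h.p. and $R\gtrsim n^{1/2+\alpha}\eta(\lat)$, this threshold is $O(\sqrt n\,\eta(\lat)/n^\alpha) = O(n^{1/2-\alpha}\eta(\lat))$ — that is, we just need \emph{one} preprocessing vector whose projection in the target direction exceeds roughly $n^{1/2-\alpha}\eta(\lat)$, a constant number of standard deviations out in the tail of a (near-)Gaussian with scale $\eta(\lat)$. (One must be a little careful: if $R$ itself is only $n^{1/2+\alpha}\eta(\lat)$ the two terms in the threshold are comparable; choosing constants so that the ``$+4$'' in $N_\alpha$ absorbs this, and noting we only need to reach distance $n^{1/2+\alpha}\eta$ not $0$, makes it go through. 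Also, when the relevant $|k|$ is larger one can rescale; but the dominant case is $k=\pm 1$.)

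The heart of the matter, and where Lemma~\ref{lem:DGS_one_dimensional_projection} enters, is estimating $\Pr_{\vec{Y}\sim D_{\lat,s}}[\inner{\vec{u},\vec{Y}}\geq \tau]$ for a \emph{fixed} unit vector $\vec{u}$ and threshold $\tau\approx n^{1/2-\alpha}\eta(\lat)$. The point of that lemma is that the one-dimensional projection $\inner{\vec{u},\vec{Y}}$ stochastically dominates (or has tail comparable to) a continuous Gaussian of width $s=\eta(\lat)$, so this probability is at least roughly $e^{-\pi\tau^2/s^2}\cdot\poly^{-1} \gtrsim e^{-O(n^{1-2\alpha})}\cdot\poly(n)^{-1}$. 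Hence a single $\vec{y}_i$ fails to help with probability at most $1 - c\cdot e^{-O(n^{1-2\alpha})}/\poly(n)$, and with $N_\alpha = n^2 e^{(n^{1/2-\alpha}+4)^2}$ independent samples the probability that \emph{all} of them fail for this particular $\vec{u}$ is at most $e^{-\Omega(N_\alpha\cdot e^{-O(n^{1-2\alpha})}/\poly(n))} = e^{-\Omega(\poly(n))}$ — doubly exponentially small in the relevant parameter, with room to spare.

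Finally, to get the ``for all $\vec{t}$'' quantifier, I would union-bound over an $\eps$-net of the sphere of directions (Lemma~\ref{lem:smallnet}) with $\eps = 1/\poly(n)$, which has $2^{O(n\log n)}$ points. As long as the per-direction failure probability $e^{-\Omega(\poly(n))}$ beats $(1+2/\eps)^{-n}$ — which it does once we make the $\poly(n)$ in the exponent large enough, e.g. by replacing $N_\alpha$ with $N_\alpha\cdot\poly(n)$, absorbed in the $2^{O(\cdot)}$ — a single net works for all directions simultaneously. A standard perturbation argument then transfers success from net directions to arbitrary $\vec{u}$: if $\vec{u}'$ is $\eps$-close to $\vec{u}$ then $\inner{\vec{u},\vec{y}_i}$ and $\inner{\vec{u}',\vec{y}_i}$ differ by at most $\eps\|\vec{y}_i\|\leq\eps\sqrt n\,\eta(\lat)$, which is negligible next to the threshold for a suitable choice of $\eps$. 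I expect the main obstacle to be exactly the use of Lemma~\ref{lem:DGS_one_dimensional_projection} to guarantee a genuine \emph{lower} bound on the one-dimensional tail $\Pr[\inner{\vec u,\vec Y}\geq\tau]$ that is uniform over all directions $\vec u$ — ordinary smoothing/Banaszczyk facts give good \emph{upper} tail bounds and control of moments, but a clean direction-independent lower bound on the probability of landing a constant number of standard deviations out is precisely the new ingredient, and matching constants so that $N_\alpha$ (with its ``$+4$'') suffices will require some care.
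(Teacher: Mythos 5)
Your proposal follows essentially the same route as the paper: reduce to making progress against a fixed unit-length target via a single sample (handling longer targets by scaling $k$, exactly as in the paper's footnote), lower-bound the one-dimensional tail $\Pr[\inner{\vec{u},\vec{Y}}\geq \tau]$ at $\tau\approx n^{1/2-\alpha}\eta(\lat)$ using Lemma~\ref{lem:DGS_one_dimensional_projection} (via Corollary~\ref{cor:get_shorter}), and union-bound over a $1/\poly(n)$-net of directions with a perturbation argument. The only nitpick is that the ``dominant case'' is not really $k=\pm 1$ --- for most iterations $\|\vec{t}_j\|\gg d$ and one needs $k\approx\|\vec{t}_j\|/d$ --- but your rescaling remark covers this, and the quantitative bookkeeping matches the paper's.
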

\begin{proof}
	By scaling appropriately, we may assume without loss of generality that $d = 1$, and therefore $s = n^{-1/2 - \alpha}$. Let $\vec{y}_1,\ldots, \vec{y}_{N_\alpha} \sim D_{\lat,s}$. To prove correctness, we must show that, with high probability over the $\vec{y}_i$, for every $\vec{t} \in \R^n$ with $\|\vec{t}\| \geq 1$, there exists an index $i$ and integer $k$ such that $\|\vec{t} - k\vec{y}_i\|^2 \leq (1-1/n^2) \cdot \|\vec{t}\|^2$. It suffices to prove that for $\|\vec{t}\| = 1$, there exists an $i$ with $\|\vec{t}- \vec{y}_i\|^2 \leq 1-4/n^2$.%
	\footnote{Indeed, suppose that $\|\vec{t} - \vec{y}_i\|^2 \leq 1-4/n^2$ and $\|\vec{t}\| = 1$, so that in particular $\inner{\vec{y}_i, \vec{t}} \geq 0$. Then for any $\beta/2 \leq k \leq \beta$, 
		\[
		\frac{\|\beta \vec{t} - k\vec{y}_i\|^2}{ \|\beta \vec{t}\|^2} = 1 - \frac{k^2}{\beta^2} \cdot \big(1- \|\vec{t} - \vec{y}_i\|^2\big) - \Big(\frac{2k}{\beta} - \frac{2k^2}{\beta^2}\Big) \cdot\inner{\vec{y}_i, \vec{t}} \leq 1 - \frac{k^2}{\beta^2} \cdot \frac{4}{n^2} \leq 1 - \frac{1}{n^2}
		\; .
	\]
}
Finally, to prove \emph{this}, it suffices to take a $(1/n^3)$-net of the unit sphere, $\vec{v}_1,\ldots, \vec{v}_M$, and to show that for each $j$, there exists an $i$ such that $\|\vec{v}_j - \vec{y}_i\|^2 \leq 1-5/n^2$.

By Lemma~\ref{lem:smallnet}, there exists such a net of cardinality $M = (3n)^{3n}$. For each $\vec{v}_j$ in this net and each index $i$, we have by Corollary~\ref{cor:get_shorter} (proven below) that
	\begin{align*}
		\Pr\big[\|\vec{v}_j - \vec{y}_{i}\|^2 \leq 1-5/n^2\big] 
				&\geq \exp(-\pi (5/(n^2s)+ns + 4)^2/4) -  2^{-n}\\
				&\geq \exp(- (n^{1/2-\alpha} + 4)^2)\\
				&= n^2/N_\alpha
				\; .
	\end{align*}
	Since the $\vec{y}_{i}$ are sampled independently, the probability that no such $i$ exists is at most
	$
	(1-n^2/N_\alpha)^{N_\alpha} < 2^{-n}/M$. The result then follows by taking a union bound over the $\vec{v}_j$.
\end{proof}

\subsection{One-dimensional projections of the discrete Gaussian}

We are interested in the lower bound in the following lemma (whose proof uses an idea from~\cite{MOLatticePoints90}). The upper bound (i.e., the subgaussianity of the discrete Gaussian) applies for all parameters $s > 0$ and was first proven in~\cite[Lemma 2.8]{MPTrapdoorsLattices12}. We include the upper bound for comparison.

\begin{lemma}
	\label{lem:DGS_one_dimensional_projection}
	For any lattice $\lat \subset \R^n$, parameter $s \geq \eta(\lat)$, unit vector $\vec{v} \in \R^n$ with $\|\vec{v}\| = 1$, and $r_0 > 0$, we have
	\begin{equation}
	\label{eq:sub_supergaussian}
		\exp(-\pi (r_0/s+2)^2) < \Pr_{\vec{X} \sim D_{\lat, s}}\big[\inner{\vec{X}, \vec{v}} \geq r_0 \big] \leq \exp(-\pi r_0^2/s^2)
		\; .
	\end{equation}
\end{lemma}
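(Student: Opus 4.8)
The plan is to extract both inequalities from the exponential moment (Laplace transform) of the one-dimensional projection. Completing the square exactly as in the standard proof of subgaussianity, for every $t \in \R$ one has the identity
\[
\expect_{\vec{X} \sim D_{\lat, s}}\big[e^{2\pi t \inner{\vec{X}, \vec{v}}/s^2}\big] = \frac{e^{\pi t^2/s^2} \rho_s(\lat - t\vec{v})}{\rho_s(\lat)} \; ,
\]
so that Claim~\ref{clm:smooth} (which is the only place $s \geq \eta(\lat)$ is used) sandwiches this quantity in the interval $[\tfrac13 e^{\pi t^2/s^2},\, e^{\pi t^2/s^2}]$. The upper bound of~\eqref{eq:sub_supergaussian} is then immediate from Markov: $\Pr[\inner{\vec{X},\vec{v}} \geq r_0] \leq e^{-2\pi t r_0/s^2}\expect[e^{2\pi t\inner{\vec{X},\vec{v}}/s^2}] \leq e^{\pi t^2/s^2 - 2\pi t r_0/s^2}$ for all $t > 0$, and $t = r_0$ gives $e^{-\pi r_0^2/s^2}$. (This direction needs only $\rho_s(\lat - t\vec{v}) \leq \rho_s(\lat)$, which holds for all $s > 0$.)

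For the lower bound the idea, following~\cite{MOLatticePoints90}, is that the \emph{lower} estimate $\expect[e^{2\pi t\inner{\vec{X},\vec{v}}/s^2}] \geq \tfrac13 e^{\pi t^2/s^2}$, applied with $t$ a little larger than $r_0$, forces a substantial chunk of the discrete Gaussian mass to sit near $\inner{\vec{X},\vec{v}} \approx t$, hence in particular above $r_0$. To make this quantitative I would fix $t := r_0 + cs$ and a window radius $a := cs$ for a suitable absolute constant $c \in (0,1)$, and bound the part of the moment contributed by the complement of the window $\inner{\vec{X},\vec{v}} \in [t-a,\, t+a]$. On the event $\inner{\vec{x},\vec{v}} \geq t+a$ one has $e^{2\pi t\inner{\vec{x},\vec{v}}/s^2} \leq e^{-2\pi\tau(t+a)/s^2} e^{2\pi(t+\tau)\inner{\vec{x},\vec{v}}/s^2}$ for any $\tau > 0$, so this contribution is at most $e^{-2\pi\tau(t+a)/s^2}\expect[e^{2\pi(t+\tau)\inner{\vec{X},\vec{v}}/s^2}] \leq e^{-2\pi\tau(t+a)/s^2 + \pi(t+\tau)^2/s^2}$, which at $\tau = a$ equals $e^{\pi t^2/s^2 - \pi a^2/s^2}$; a symmetric computation using the parameter $t - a$ bounds the event $\inner{\vec{x},\vec{v}} \leq t-a$ by the same quantity. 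Therefore the window contributes at least $(\tfrac13 - 2e^{-\pi a^2/s^2})e^{\pi t^2/s^2}$ to the moment, and since $e^{2\pi t\inner{\vec{X},\vec{v}}/s^2} \leq e^{2\pi t(t+a)/s^2}$ on the window,
\[
\Pr\big[\inner{\vec{X},\vec{v}} \geq r_0\big] = \Pr\big[\inner{\vec{X},\vec{v}} \geq t-a\big] \geq \big(\tfrac13 - 2e^{-\pi a^2/s^2}\big) e^{-\pi t^2/s^2 - 2\pi t a/s^2} \; .
\]

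Finally I would check the parameters. With $a = cs$ and $t = r_0 + cs$ the exponent equals $-\tfrac{\pi}{s^2}(r_0 + cs)(r_0 + 3cs)$, and since $(r_0+2s)^2 - (r_0+cs)(r_0+3cs) = 4(1-c)s r_0 + (4 - 3c^2)s^2 \geq (4-3c^2)s^2$ for $c \leq 1$ and $r_0 > 0$, this exponent is at least $-\pi(r_0/s+2)^2 + \pi(4-3c^2)$; meanwhile $\tfrac13 - 2e^{-\pi c^2} > 0$ once $c > \sqrt{(\ln 6)/\pi} \approx 0.76$. Taking, say, $c = 9/10$, the resulting constant $(\tfrac13 - 2e^{-\pi c^2})e^{\pi(4-3c^2)}$ is comfortably larger than $1$, yielding the strict inequality claimed. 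The one genuinely delicate point — and the step I expect to be the main obstacle — is exactly this passage from a lower bound on the exponential moment to a lower bound on a one-sided tail: a priori the moment could be inflated by mass lying \emph{below} $r_0$, and the role of the two-sided Chernoff estimate above is precisely to confine essentially all of the moment's mass to an $O(s)$-window around $t$, so that the trivial pointwise bound on $e^{2\pi t\inner{\vec{x},\vec{v}}/s^2}$ over that window can be inverted into a probability.
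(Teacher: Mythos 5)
Your proof is correct and is essentially the paper's argument in different clothing: both derive the two-sided bound $\tfrac13 e^{\pi\beta^2/s^2} \le \expect[e^{2\pi\beta\inner{\vec{X},\vec{v}}/s^2}] \le e^{\pi\beta^2/s^2}$ from Claim~\ref{clm:smooth}, and both convert the lower bound into a tail bound by confining the moment's mass to an $O(s)$-window above $r_0$ via the moment generating function at shifted parameters. The paper packages your three-region decomposition into a single test function $f(r) = e^{2\pi\beta r}\bigl(1 - e^{2\pi(r_0-r)} - e^{2\pi(r-r_0-2)}\bigr)$ whose two subtracted exponentials are exactly your tail Chernoff terms (with your $c=1$), so the only differences are bookkeeping and the choice of constant.
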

\begin{proof}
	By scaling appropriately, we may assume that $s = 1$. Let $\beta > 0$ to be chosen later. By completing the square in the exponent, we see that
\[
		\expect[\exp(2\pi \beta \inner{\vec{X}, \vec{v}})] = \exp(\pi \beta^2) \cdot \frac{\rho_1(\lat - \beta \vec{v})}{\rho_1(\lat)} 
\; .
\]
	Therefore, by Claim~\ref{clm:smooth},
	\begin{equation}
	\label{eq:smooth_beta}
		\frac{1}{3} \leq \exp(-\pi \beta^2 ) \cdot \expect[\exp(2\pi \beta \inner{\vec{X}, \vec{v}})] \leq 1
		\; .
	\end{equation}
	I.e., we know the moment generating function of $\inner{\vec{X}, \vec{v}}$ to within a multiplicative constant.
	The upper bound in Eq.~\eqref{eq:sub_supergaussian} then follows from taking $\beta = r_0$ and applying Markov's inequality. (This proof of the upper bound is identical to the proof in~\cite{MPTrapdoorsLattices12}. See their Lemma 2.8 and their discussion above it.)

	Turning to the lower bound, for $r \in \R$, let
	\[
	f(r) := \exp(2\pi \beta r) \cdot \big(1- \exp(2\pi(r_0 - r)) - \exp(2\pi(r-r_0 - 2)) \big) 
	\; .
	\]
	Notice that $f(r) < 0$ unless $r_0 < r < r_0+2$. And, $f(r) < \exp(2\pi \beta (r_0 + 2))$ for all $r$.
	Therefore,
	\begin{equation}
	\label{eq:f_upper}
	\expect\big[f(\inner{\vec{X}, \vec{v}})\big] < \exp(2\pi \beta (r_0 + 2)) \cdot \Pr\big[\inner{\vec{X}, \vec{v}} \geq r_0\big]
	\; .
	\end{equation}
	By applying Eq.~\eqref{eq:smooth_beta} term-wise and taking $\beta = r_0+1$, we have
	\begin{align}
		\expect\big[f(\inner{\vec{X}, \vec{v}})\big] 
			&\geq \frac{1}{3} \cdot \exp(\pi \beta^2) - \exp(\pi (\beta - 1)^2 + 2\pi  r_0) - \exp(\pi (\beta + 1)^2 - 2\pi (r_0+2)) \nonumber \\
			&= \exp(\pi r_0^2 + 2\pi r_0) \cdot (e^\pi/3- 2  ) \nonumber \\
			&> \exp( \pi r_0^2 + 2\pi r_0) \label{eq:f_lower}
		\; .
	\end{align}
	Combining Eqs.~\eqref{eq:f_upper} and~\eqref{eq:f_lower} and rearranging, we have
	\[
		\Pr\big[\inner{\vec{X}, \vec{v}} \geq r_0\big] > \exp(\pi r_0^2 + 2\pi r_0-2\pi \beta (r_0 + 2)) = \exp(-\pi (r_0+2)^2)
		\; ,
	\]
	as needed.
\end{proof}

\begin{corollary}
	\label{cor:get_shorter}
	For any $0 < r < 1$, lattice $\lat \subset \R^n$, unit vector $\vec{v} \in \R^n$ with $\|\vec{v}\| = 1$, and $s \geq \eta(\lat)$, we have
	\[
	\Pr_{\vec{X} \sim D_{\lat, s}}\big[\|\vec{v} - \vec{X}\|^2 \leq 1 - r \big] > \exp(-\pi (r/s+ns + 4)^2/4) -  2^{-n}
	\; .
	\]
\end{corollary}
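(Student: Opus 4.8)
The plan is to reduce the two-dimensional-looking event $\|\vec{v}-\vec{X}\|^2 \leq 1-r$ to the one-dimensional event $\inner{\vec{X},\vec{v}} \geq r_0$ controlled by Lemma~\ref{lem:DGS_one_dimensional_projection}, paying for the reduction with Banaszczyk's tail bound. First I would expand, using $\|\vec{v}\| = 1$,
\[
\|\vec{v}-\vec{X}\|^2 = 1 - 2\inner{\vec{X},\vec{v}} + \|\vec{X}\|^2 \; ,
\]
so that $\|\vec{v}-\vec{X}\|^2 \leq 1-r$ is equivalent to $2\inner{\vec{X},\vec{v}} \geq r + \|\vec{X}\|^2$. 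The obstacle here is the $\|\vec{X}\|^2$ term, which is a priori unbounded; I would kill it with Theorem~\ref{thm:lazy_banaszczyk}, which guarantees $\|\vec{X}\| \leq \sqrt{n}\,s$, and hence $\|\vec{X}\|^2 \leq n s^2$, outside an event of probability at most $2^{-n}$.

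Next I would set $r_0 := (r + n s^2)/2$ and observe that on the good event $\{\|\vec{X}\| \leq \sqrt{n}\,s\}$, the condition $\inner{\vec{X},\vec{v}} \geq r_0$ already implies $2\inner{\vec{X},\vec{v}} \geq r + n s^2 \geq r + \|\vec{X}\|^2$, i.e.\ $\|\vec{v}-\vec{X}\|^2 \leq 1-r$. Therefore
\[
\Pr_{\vec{X}\sim D_{\lat,s}}\big[\|\vec{v}-\vec{X}\|^2 \leq 1-r\big]
\;\geq\; \Pr\big[\inner{\vec{X},\vec{v}} \geq r_0\big] - \Pr\big[\|\vec{X}\| > \sqrt{n}\,s\big]
\;\geq\; \Pr\big[\inner{\vec{X},\vec{v}} \geq r_0\big] - 2^{-n} \; .
\]

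Finally I would invoke the lower bound of Lemma~\ref{lem:DGS_one_dimensional_projection} with this $r_0$, giving $\Pr[\inner{\vec{X},\vec{v}} \geq r_0] > \exp(-\pi(r_0/s + 2)^2)$, and simplify the exponent: since $s \geq \eta(\lat)$ is legitimate for the lemma, and
\[
\frac{r_0}{s} + 2 = \frac{r + n s^2}{2s} + 2 = \frac{1}{2}\Big(\frac{r}{s} + n s + 4\Big) \; ,
\]
we get $\exp(-\pi(r_0/s+2)^2) = \exp\big(-\pi (r/s + n s + 4)^2/4\big)$, which combined with the displayed inequality yields exactly the claimed bound. There is no real technical obstacle beyond bookkeeping; the only point requiring a little care is that we must not try to bound $\|\vec{X}\|^2$ inside the moment-generating-function argument, but rather peel it off separately via the tail bound before reducing to the one-dimensional projection.
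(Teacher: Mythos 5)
Your proposal is correct and is essentially identical to the paper's proof: expand $\|\vec{v}-\vec{X}\|^2$, use Theorem~\ref{thm:lazy_banaszczyk} to control $\|\vec{X}\|^2 \leq ns^2$ up to a $2^{-n}$ failure probability, apply the lower bound of Lemma~\ref{lem:DGS_one_dimensional_projection} with $r_0 = (r+ns^2)/2$, and union bound. The exponent bookkeeping $(r_0/s+2 = (r/s+ns+4)/2)$ is exactly the paper's calculation.
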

\begin{proof}
	Notice that $\|\vec{v} - \vec{X}\|^2 = 1+\|\vec{X}\|^2-2\inner{\vec{v}, \vec{X}}$. By Theorem~\ref{thm:lazy_banaszczyk}, we have that $\|\vec{X}\| \leq n s^2$ except with probability at most $2^{-n}$. By Lemma~\ref{lem:DGS_one_dimensional_projection}, we see that
	\[
	\Pr[\inner{\vec{v}, \vec{X}} \geq (n s^2 + r)/2] > \exp(-\pi (r/s+ns + 4)^2/4)
	\; .
	\]
	The result follows from union bound.
\end{proof}

\bibliographystyle{alpha}

\end{document}